\documentclass{amsart}

\allowdisplaybreaks
\usepackage{amsbsy}
\usepackage{amsaddr}
\usepackage{luatex85}
\usepackage{color,soul}

\usepackage{amsmath,amsthm}
\textwidth=14.5cm \oddsidemargin=1cm  \evensidemargin=1cm\setlength{\parskip}{10pt} \setlength{\headsep}{20pt}

\usepackage{mathrsfs}
\usepackage{amscd,amssymb, amsfonts, verbatim,subfigure, enumerate}
\usepackage[mathcal]{eucal}
\usepackage[super]{nth}
\usepackage[pdftex]{graphicx}

\usepackage{mathpazo}

\linespread{1.2}  
\usepackage{color,slashed}
\setcounter{tocdepth}{1}

\newcommand{\g}{\mathfrak{g}}

\newcommand{\br}{\overline}

\newcommand{\C}{\mathbb C}

\newcommand{\Z}{\mathbb Z}

\newcommand{\mbb}{\mathbb}

\newcommand{\R}{\mbb R}

\newcommand{\dbar}{\br{\partial}}

 \DeclareMathOperator{\Hom}{Hom}

\DeclareMathOperator{\Tr}{Tr}

\newtheoremstyle{thm}
  {7pt}
  {7pt}
  {\itshape}
  {}
  {\bf}
  {.}
  {5pt}
  {\thmnumber{#2 }\thmname{#1}\thmnote{ (#3)}}

\newtheoremstyle{def}
  {7pt}
  {10pt}
  {\itshape}
  {}
  {\bf}
  {.}
  {5pt}
  {\thmnumber{#2} \thmname{#1}\thmnote{ (#3)}}

\newtheoremstyle{rem}
  {4pt}
  {10pt}
  {}
  {}
  {\itshape}
  {:}
  {3pt}
  {}

\newtheoremstyle{texttheorem}
  {8pt}
  {8pt}
  {\itshape}
  {}
  {\bf}
  {. \hspace{5pt}}
  {3pt}
  {}

\newtheorem*{theorem*}{Theorem}
\newtheorem*{lemma*}{Lemma}
\newtheorem*{corollary*}{Corollary}
\newtheorem*{proposition*}{Proposition}
\newtheorem*{definition*}{Definition}

\newtheorem{theorem}{Theorem}[subsection]
\newtheorem{thm-def}{Theorem/Definition}[theorem]
\newtheorem{proposition}[theorem]{Proposition}

\numberwithin{equation}{subsection}
\usepackage{tikz}
\usepackage{quiver}

\theoremstyle{def}

\usepackage[compat=1.1.0]{tikz-feynman} 

\theoremstyle{rem}

\usepackage{hyperref}
\usetikzlibrary{shapes.geometric}



\newcommand{\PT}{\mbb{PT}}
\newcommand{\CP}{\mbb{CP}}

\newcommand*\circled[1]{\tikz[baseline=(char.base)]{
            \node[shape=circle,draw,inner sep=2pt] (char) {#1};}}

\title{Twisted holography from the B-model on a 7-fold}
\author{Seraphim Jarov}

\address{Perimeter Institute for Theoretical Physics}
\email{sjarov@perimeterinstitute.ca}

\begin{document}

\begin{abstract}
    I study a topological string construction of the holographic duality between Kodaira-Spencer gravity on the Calabi-Yau 7-fold $\mathcal{O}(-1)^4\to\PT$ in the presence of a stack of $N$ backreacted D5 branes wrapping twistor space, $\PT$. The theory on the stack of branes is the twistor uplift of self-dual $\mathcal{N}=4$ gauge theory. I show that turning on a bulk superpotential and twisting the brane theory by the dual supercharge reduces the duality to twisted holography which relates the B-model on AdS$_3\times S^3\cong SL(2,\C)$ to the 2d chiral algebra subsector of $\mathcal{N}=4$. I do an analogous computation for the twistor uplift of self-dual $\mathcal{N}=2$ by working on the Calabi-Yau 7-fold $\mathcal{O}(-2,-2)\oplus \mathcal{O}(0,-1)^2\to\CP^1\times\PT$. I also connect twists of the twistor uplift of self-dual $\mathcal{N}=4$ with the matrix model found by supersymmetric localization on $S^4$ and the Dijkgraaf-Vafa matrix model construction.
\end{abstract}

\maketitle

\tableofcontents

\section{Introduction}

This paper studies conjectural holographic dualities involving twistor uplifts of certain supersymmetric self-dual theories. In particular, I consider the following boundary theories: the twistor uplifts of self-dual $\mathcal{N}=4$ and $\mathcal{N}=2$ super Yang-Mills (SYM). Motivated by the Witten-Berkovitz twistor string \cite{Witten_2004,Berkovits_2004}, the self-dual $\mathcal{N}=4$ duality was originally proposed in \cite{bittleston2025selfdualgaugetheory}. General constructions of supersymmetric theories on twistor superspaces are discussed in detail in \cite{costello2013notessupersymmetricholomorphicfield}.

The story takes inspiration from the twisted holography program \cite{costello2021twistedholography,Costello:2016mgj}, which was also studied in earlier examples \cite{bonetti2016supersymmetriclocalizationads5protected,Ishtiaque_2020}. Recently, the idea of studying holography for topological strings has been extended to many examples \cite{Costello:2023hmi,bittleston2025selfdualgaugetheory,Costello_2021,Sharma:2025ntb}. There has been particular interest in engineering Calabi-Yaus involving twistor spaces as a way to engineer dualities involving certain 4-dimensional self-dual theories \cite{Costello:2021bah,bittleston2025selfdualgaugetheory,Jarov:2025bmq,Sharma:2025ntb}.

In this work, as suggested in \cite{bittleston2025selfdualgaugetheory}, we consider the B-model topological string on different Calabi-Yau 7-folds to get a higher-dimensional duality than the original construction of \cite{costello2021twistedholography}. For the self-dual $\mathcal{N}=4$ case, we study the B-model on $\mathcal{O}(-1)^4\to\PT$ in the presence of $N$ backreacted D5 branes wrapping the zero section. The self-dual $\mathcal{N}=2$ case arises when we study $\mathcal{O}(0,-1)^2\oplus\mathcal{O}(-2,-2)\to(\CP^1\times\PT)$ where again, we wrap the zero section with $N$ D7 branes. This geometry can also be understood as the $T^*\CP^1\oplus\C^2\to\PT$ associated to the $\C^*$ action scaling the $T^*\CP^1$ fibres with weight $-2$ and each of the $\C$ fibres with weight $-1$.

In both cases, I compute the closed string field sourced by backreacting the $N$ D5/D7 branes. The main finding of this paper is that we can find the original twisted holography duality when we twist the bulk/boundary theories by carefully chosen superpotentials/supercharges. This finding is illustrated in Fig. \ref{fig:overview} for the self-dual $\mathcal{N}=4$ case. In particular, we find that the backreacted bulk geometry twists to $SL(2,\C)$ in the first example and in the second we find $SL(2,\C)/\Z_2$. The theories on the boundaries twist to the 2d chiral algebra subsectors of $\mathcal{N}=4$ and $\mathcal{N}=2$ SYM which were originally found in \cite{Beem_2015}.

\begin{figure}[h]
    \centering
    \includegraphics[width=0.75\linewidth]{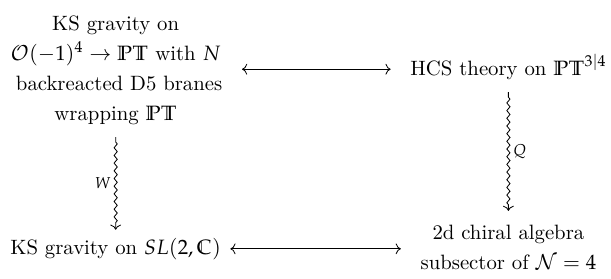}
    \caption{The holographic duality studied in this paper at the top of the figure, where KS stands for Kodaira-Spencer and hCS stands for holomorphic Chern-Simons. The vertical arrows denote twisting by the superpotential/supercharge $W$/$Q$. The bottom row is the duality studied in twisted holography \cite{costello2021twistedholography}.}
    \label{fig:overview}
\end{figure}

At the end of this paper, I connect matrix model subsectors in the theories studied here with known matrix model constructions. I show that the Gaussian matrix model \cite{gopakumar2011simplestgaugestringduality,Gopakumar_2013}  subsector of $\mathcal{N}=4$ SYM found at the poles of $S^4$ via supersymmetric localization \cite{Pestun_2012} can be viewed as a twist of the twistor uplift of self-dual $\mathcal{N}=4$. I also show that the Gaussian matrix model found by \cite{Budzik_2023} living in the chiral algebra subsector of $\mathcal{N}=4$ can be derived from the Dijkgraaf-Vafa matrix model construction \cite{Dijkgraaf_2002}.

\subsection{Future work}

It would be interesting to see how this holographic construction fits into the AdS$_5$/CFT$_4$ correspondence \cite{Maldacena_1999,witten1998antisitterspaceholography} since the theory living on the brane Penrose transforms to the self-dual subsector of $\mathcal{N}=4$. In particular, we would expect to find AdS$_5$ in the backreacted bulk geometry. While I currently do not know how this can emerge, the fact that we can find AdS$_3$ upon localizing provides some evidence that the bulk does contain AdS$_5$. Very recently, an attempt at understanding the flux sourced by the backreaction was made in \cite{Sharma:2025ntb}.

It would be nice to perform thorough tests of the duality. Recently, self-dual $\mathcal{N}=4$ determinants have been studied \cite{Sharma:2025ntb} following the style of checks done in the original twisted holography setting in \cite{Budzik_2023,Lopez-Raven:2024vop}. In section \ref{sec:MM}, I find the matrix model that Pestun uses to compute equatorial Wilson loops in $\mathcal{N}=4$ SYM on $S^4$ \cite{Pestun_2012}. It would be interesting to find a bulk description of these Wilson loops.

In subsection \ref{sec:MM}, I show that a certain Beltrami differential deforms $\C^2/\Z_2\times\C$ (equivalently, on $T^*\CP^1\times\C$ by blowing up the $\C^2/\Z^2$ singularity) to $SL(2,\C)/\Z_2$. It would be interesting if one could find $SL(2,\C)/\Z_k$ when studying the B-model on more general orbifolds $\C^2/\Z_k\times\C$. These orbifold constructions were discussed in \cite{Gaiotto:2024dwr,Abajian:2024rjq}.

\subsection*{Acknowledgments}

Many thanks go to my PhD advisor, Kevin Costello, for suggesting this project and his mentorship. I thank Roland Bittleston for his thoughtful guidance at every stage of this project. I would also like to thank Kasia Budzik and Davide Gaiotto for useful conversations and comments. Research at Perimeter Institute is supported in part by the Government of Canada through the Department of Innovation, Science and Economic Development and by the Province of Ontario through the Ministry of Colleges and Universities.

\section{Background}\label{sec:bgr}

The open string sector of the B-model topological string can be defined on any odd complex-dimensional Calabi-Yau manifold. This was studied on $\C^3$ in twisted holography \cite{costello2021twistedholography}. But an equivalent construction can be made by starting with the resolved conifold
\begin{align}\label{eq:res_coni}
   X= \mathcal{O}(-1)^2\to\CP^1.
\end{align}
The twisted holography correspondence between the 2d chiral algebra subsector of $\mathcal{N}=4$ SYM on the stack of D1 branes and Kodaira-Spencer theory on the bulk $SL(2,\C)$ geometry arises from backreacting $N$ D1 branes wrapping $\CP^1\subset X$.

This paper is interested in understanding topological string constructions of holography on higher-dimensional Calabi-Yau manifolds. The story will be similar to the original twisted holography construction, so I will review some of the basic calculations from their setup. In the examples I study, the dual brane theories are certain twistor uplifts of SYM. So, I will quickly introduce these theories at the end of this section as well. 

\subsection{The backreaction}\label{subsec:bgr-backreact}

Let me review the backreaction computation of \cite{costello2021twistedholography} to see how the $SL(2,\C)$ geometry emerges. I will employ similar techniques in section \ref{sec:N=2} to discover an $SL(2,\C)/\Z_2$ geometry from backreaction on a different background. In section \ref{sec:N=4}, I will also show that the $SL(2,\C)$ geometry emerges from the localization of a backreaction in 7-dimensions. 

The bulk closed string theory of the B-model is known as Kodaira-Spencer gravity \cite{Bershadsky:1993cx,Bershadsky_1993}. The equation of motion for closed string fields is 
\begin{align*}
    \dbar\alpha+\frac{1}{2}[\alpha,\alpha] = 0
\end{align*}
where $[\cdot,\cdot]$ is the Schouten bracket and $\alpha\in PV^{\bullet,\bullet}(X)$ is a polyvector field on the background Calabi-Yau $X$. In twisted holography, $X=\C^3$ (or equivalently, $\mathcal{O}(-1)^2\to\CP^1$). Backreacting a stack of $N$ D1 branes wrapping $\C\subset X$, introduces a source to the equations of motion 
\begin{align*}
    \dbar\alpha+\frac{1}{2}[\alpha,\alpha] + N\delta_{\C}=0.
\end{align*}
Where the $\delta$-function is supported on the stack of branes. If we call the transverse directions $w_1,w_2$ and the brane direction $z$, then a solution to the sourced equation is known as the Bochner-Martinelli kernel
\begin{align*}
    \beta = -\frac{N}{(2\pi)^4}\frac{\epsilon_{ij}\bar{w}^id\bar{w}^j}{||w||^4} \partial_{z}.
\end{align*}
This closed string field is known as a Beltrami differential, which deforms the $\C^3$ complex structure by $\dbar\mapsto\dbar+\beta$. One can then check that the ring of $(\dbar+\beta)$-holomorphic functions is spanned by 4 functions $f_i$ that satisfy the $SL(2,\C)$ condition $f_1f_4-f_2f_3 = N$. For an explicit derivation of these functions, see section 4 of \cite{costello2021twistedholography}.

\subsection{The theory on the brane}\label{subsec:bgr-brane}

To understand the open string sector of the B-model, I will explain the simple example of the theory living on the stack of D1 branes in twisted holography. Let's view $\C^3$ as a trivial $\C^2$ bundle over $\C$ with base coordinate $z$ and fibre coordinates $w_i$ for $i = 1,2$. Then wrapping the locus $w_i = 0$ with $N$ D1 branes leads to the following theory on the brane
\begin{align*}
    \int_{\C^{1|2}}dz\wedge d^2\theta_i\wedge\text{hCS}(\mathcal{A}).
\end{align*}
Where the fibres have been parity shifted and we call the fermionic directions $\theta_i$. The coordinates $\theta_i$ in the brane action can be thought of as T-dual to the $w_i$ bulk coordinates. We study the brane theory on this odd background here because studying topological strings on an even vector bundle $V$ is equivalent to working on the odd vector bundle $\Pi V^\vee$ \cite{coates2001quantumriemannroch,BEILINSON1988317}. This should not be confused with the deeper result coming from twisted holography that the brane theory is equivalent to the bulk theory on the backreacted geometry.

The gauge field is then $\mathcal{A}\in\Omega^{0,\bullet}(\C,\g)[\theta_1,\theta_2]$. If one integrates out the fermionic directions, one lands on the gauged $\beta\gamma$ system studied in \cite{costello2021twistedholography,Beem_2015}. I work this out explicitly in a different, but similar setting in appendix \ref{app:MM-Beltrami}. I will study open string actions in the same fashion on higher-dimensional branes in sections \ref{sec:N=4} and \ref{sec:N=2}.

\subsection{The twistor string}\label{subsec:bgr-twist_string}

In sections \ref{sec:N=4} and \ref{sec:N=2}, I study open strings on D5 and D7 branes, which are equivalent to twistor uplifts of certain supersymmetric self-dual theories. Here I will review the twistor description of self-dual $\mathcal{N}=4$ and how it can be realized as the theory on a D5 brane using the language of this paper.

In \cite{Witten_2004,Berkovits_2004,Boels:2006ir}, holomorphic Chern-Simons theory on twistor superspace $\PT^{3|4}$ was shown to Penrose transform to self-dual $\mathcal{N}=4$ SYM on $\R^4$. In the style of this paper, twistor superspace can be viewed as the total space of 
\begin{align*}
    \Pi\mathcal{O}(1)^4\to\PT.
\end{align*}
Let's call the fermionic directions $\theta^i$, the twistor $\CP^1$ direction $z$, and the twistor fibres $v_{\dot{\alpha}}$. Then the twistor action for self-dual $\mathcal{N}=4$ is 
\begin{align}\label{eq:twistorstring}
    \int_{\Pi\mathcal{O}(1)^4\to\PT}\Omega\wedge d^4\theta^i\wedge \text{hCS}(\mathcal{A}).
\end{align}
Where $\Omega$ is the meromorphic volume form on $\PT$ and our superfield is valued in 
\begin{align*}
    \mathcal{A}\in\Omega^{0,1}(\PT,\g)[\theta^i]_{i=1,2,3,4}.
\end{align*}
This is the theory studied in \cite{Witten_2004,Berkovits_2004,Boels:2006ir}. 

\textit{How does this theory arise from the topological string?} Consider the even holomorphic vector bundle
\begin{align*}
    X=\mathcal{O}(-1)^4\to\PT.
\end{align*}
As stated in subsection \ref{subsec:bgr-brane}, the theory on a stack of $N$ D5 branes wrapping $\PT\subset X$ is holomorphic Chern-Simons theory on the odd vector bundle $\Pi(\mathcal{O}(-1)^4)^{\check{}}=\Pi\mathcal{O}(1)^4$. So, we see that studying the B-model on $X$ with D5 branes gives us a topological string construction of the twistor uplift of self-dual $\mathcal{N}=4$.

\section{The self-dual $\mathcal{N}=4$ cosntruction}\label{sec:N=4}

Consider the Calabi-Yau 7-fold
\begin{align*}
    X = \mathcal{O}(-1)^4\oplus\mathcal{O}(1)^2\to\CP^1.
\end{align*}
Notice that this is in some sense a higher-dimensional upgrade of the resolved conifold introduced in equation \ref{eq:res_coni}. Also note that this is the same as $\mathcal{O}(-1)^4\to\PT$. Let's call the $\mathcal{O}(-1)^4$ directions $w_i$ with $i = 1,2,3,4$, the $\mathcal{O}(1)^2$ directions $v_{\dot{\alpha}}$ with $\dot{\alpha}=\dot{1},\dot{2}$, and the $\CP^1$ coordinate $z$.

It is easy to see that $X$ is Calabi-Yau since the canonical bundle of $\CP^1$ is $K_{\CP^1}=\mathcal{O}(2)$.

\subsection{The backreaction}\label{subsec:N=4BR}

To build a holographic duality, we wrap $\PT\subset X$ with $N$ D5 branes and compute the backreaction. As discussed in the background section, this amounts to solving the Kodaira-Spencer equation of motion in the presence of a source term
\begin{align}\label{eq:breact_ODE}
    \dbar\alpha+\frac{1}{2}[\alpha,\alpha] + N\delta_\PT=0.
\end{align}
We will soon see that our choice of Beltrami differential has no $z$ dependence on the term containing $\partial_z$. So, we are interested in solving
\begin{align*}
    \dbar\alpha=-N\delta_{\PT}.
\end{align*}
As in the 3-dimensional case, the solution is the 4-dimensional Bochner-Martinelli kernel, 
\begin{align*}
    \alpha = -\frac{3!N}{(2\pi)^4}\frac{\epsilon_{ijkl}\bar{w}^id\bar{w}^jd\bar{w}^kd\bar{w}^l}{||w||^8} \partial_z\partial_{v_{\dot{1}}}\partial_{v_{\dot{2}}}.
\end{align*}
In the 3-dimensional case, the backreaction sourced a deformation of the complex structure which deformed the theory on $\mathcal{O}(-1)^2\to\CP^1$ to a theory on $SL(2,\C)$. In this case, however the backreaction sources some higher form polyvector field which does not have a natural interpretation as a deformation to the bulk geometry.

We will see later that the theory on the brane is the twistor uplift of self-dual $\mathcal{N}=4$ SYM, so we do expect to find some subsector of ordinary AdS$_5$/CFT$_4$ from our construction. In particular, we would expect that the flux sourced by the backreaction relates the bulk geometry to AdS$_5$ in some way. While finding such a relation is beyond the scope of this paper, we will see that turning on a superpotential in the bulk localizes the backreacted geometry to AdS$_3\times S^3$.

\subsection*{Turning on a superpotential}

Here, I connect the above bulk theory to the twisted holography program by localizing to the zero locus of a chosen superpotential. This also provides basic evidence that the bulk geometry may contain AdS$_5$ since we find AdS$_3\times S^3\cong SL(2,\C)$ upon localizing.

To connect to twisted holography, we want to localize to $\mathcal{O}(-1)^2\to\CP^1$, so it is natural for us to consider the following superpotential
\begin{align*}
    W = v_{\dot{1}}w_3+v_{\dot{2}}w_4.
\end{align*}
The localization amounts to solving the modified equation of motion
\begin{align}\label{eq:sup_pot_eq}
    \dbar\alpha+\frac{1}{2}[\alpha,\alpha]+[W,\alpha] + N\delta_\PT=0.
\end{align}
Again, we look for solutions with no dependence on the vector field directions, so we can ignore the $[\alpha,\alpha]$ term when solving. Our approach will be to solve this order by order to find that the solution takes the form $\alpha=\delta\alpha+\delta^{(2)}\alpha+\delta^{(3)}\alpha$. At leading order, we have
\begin{align*}
    \delta\alpha = -\frac{3!N}{(2\pi)^4}\frac{\epsilon_{ijkl}\bar{w}^id\bar{w}^jd\bar{w}^kd\bar{w}^l}{||w||^8} \partial_z\partial_{v_{\dot{1}}}\partial_{v_{\dot{2}}}.
\end{align*}
Which is the solution to the backreaction without the superpotential. Plugging this into our expression for $\alpha$, we then get the equation we need to solve
\begin{align*}
    \dbar \delta^{(2)}\alpha+\dbar\delta^{(3)}\alpha+[W,\delta\alpha]+[W,\delta^{(2)}\alpha]+[W,\delta^{(3)}\alpha] = 0.
\end{align*}
The second and third-order terms will be shown to satisfy
\begin{align}\label{eq:higher_order_ODE}
    \dbar \delta^{(2)}\alpha+[W,\delta\alpha]=0\qquad \dbar\delta^{(3)}\alpha+[W,\delta^{(2)}\alpha]=0\qquad [W,\delta^{(3)}\alpha]=0.
\end{align}
Hence implying that $\alpha$ solves equation \ref{eq:sup_pot_eq}. The solutions at each order are the following
\begin{align*}
    \delta^{(2)}\alpha=\frac{3!N}{3(2\pi)^4}\left(\frac{\epsilon_{ij3k}\bar{w}^id\bar{w}^jd\bar{w}^k}{3||w||^6}\partial_{z}\partial_{v_{\dot{2}}}+\frac{\epsilon_{ijk4}\bar{w}^id\bar{w}^jd\bar{w}^k}{||w||^6}\partial_{z}\partial_{v_{\dot{1}}}\right)\qquad \delta^{(3)}\alpha= -\frac{3!N}{3(2\pi)^4}\frac{\epsilon_{ij34}\bar{w}^id\bar{w}^j}{||w||^4} \partial_{z}.
\end{align*}
I will now show that these solve equation \ref{eq:higher_order_ODE}. Following the calculation we performed in appendix \ref{app:BM_calc}, we find that 
\begin{align*}
    \dbar \delta^{(2)}\alpha & = \frac{3!N}{(2\pi)^4}\frac{\epsilon_{ijkl}\bar{w}^id\bar{w}^jd\bar{w}^kd\bar{w}^l}{||w||^8}\partial_{z}\partial_{v_{\dot{2}}}w_3+\frac{3!N}{(2\pi)^4}\frac{\epsilon_{ijkl}\bar{w}^id\bar{w}^jd\bar{w}^kd\bar{w}^l}{||w||^8}\partial_{z}\partial_{v_{\dot{1}}}w_4.
\end{align*}
The above cancels
\begin{align*}
    [W,\delta\alpha]=-\frac{3!N}{(2\pi)^4}\frac{\epsilon_{ijkl}\bar{w}^id\bar{w}^jd\bar{w}^kd\bar{w}^l}{||w||^8} \partial_z\partial_{v_{\dot{2}}}w_3-\frac{3!N}{(2\pi)^4}\frac{\epsilon_{ijkl}\bar{w}^id\bar{w}^jd\bar{w}^kd\bar{w}^l}{||w||^8} \partial_z\partial_{v_{\dot{1}}}w_4.
\end{align*}
So, we have verified that the second-order correction satisfies the correct differential equation. Next, we see that
\begin{align*}
    \dbar\delta^{(3)}\alpha = -\frac{3!N}{6(2\pi)^4}\frac{\epsilon_{ij3k}\bar{w}^id\bar{w}^jd\bar{w}^k}{||w||^6}w_4-\frac{3!N}{6(2\pi)^4}\frac{\epsilon_{ijk4}\bar{w}^id\bar{w}^jd\bar{w}^k}{||w||^6}w_3
\end{align*}
again, using the result in appendix \ref{app:BM_calc}. This cancels
\begin{align*}
    [W,\delta^{(2)}\alpha] = \frac{3!N}{6(2\pi)^4}\frac{\epsilon_{ij3k}\bar{w}^id\bar{w}^jd\bar{w}^k}{3||w||^6}\partial_{z}w_4+\frac{3!N}{6(2\pi)^4}\frac{\epsilon_{ijk4}\bar{w}^id\bar{w}^jd\bar{w}^k}{||w||^6}\partial_{z}w_3.
\end{align*}
Finally, we need to show that the third-order correction is annihilated by $[W,\cdot]$. This follows from the fact that $\delta^{(3)}\alpha$ has no component along the $\partial_{v_{\dot{\alpha}}}$ vector directions. This shows that the solution to equation \ref{eq:sup_pot_eq} is
\begin{align*}
    \alpha =2\frac{3!N}{(2\pi)^4}\left(-\frac{\epsilon_{ijkl}\bar{w}^id\bar{w}^jd\bar{w}^kd\bar{w}^l}{||w||^8} \partial_{v_{\dot{1}}}\partial_{v_{\dot{2}}}+\frac{\epsilon_{ij3k}\bar{w}^id\bar{w}^jd\bar{w}^k}{6||w||^6}\partial_{v_{\dot{2}}}+\frac{\epsilon_{ijk4}\bar{w}^id\bar{w}^jd\bar{w}^k}{6||w||^6}\partial_{v_{\dot{1}}}-\frac{\epsilon_{ij34}\bar{w}^id\bar{w}^j}{6||w||^4} \right)\partial_{z}.
\end{align*}
This superpotential localizes us to the $w_1,w_2,z$ space, so the Beltrami only sees the third-order correction
\begin{align*}
    -\frac{2N}{(2\pi)^4}\frac{\epsilon_{ij}\bar{w}^id\bar{w}^j}{||w||^4} \partial_{z}
\end{align*}
where $w_i=1,2$. This is the Beltrami differential that deformed the resolved conifold to $SL(2,\C)$ in twisted holography, which I reviewed in the background section \ref{sec:bgr}. We have thus found AdS$_3\times S^3$ when we localize the backreacted 7-fold!

\subsection{The theory on the brane}\label{subsec:N=4BRANE}

As explained in the background section \ref{sec:bgr}, the theory on a stack of D5 branes is defined on the super twistor space $\PT^{3|4}$.

I also explained that the open string action on this stack of D5 branes is equivalent to the twistor uplift of self-dual $\mathcal{N}=4$.

Since we are studying holography, we expect to find a dual computation to the localization we performed in the bulk. Indeed, we consider twisting the theory on the stack of D5 branes by the dual supercharge
\begin{align*}
    Q = v_{\dot{1}}\partial_{\theta_3}+v_{\dot{2}}\partial_{\theta_4}.
\end{align*}
Where $\theta_i$ are the coordinates in the fermionic directions. I will show that upon twisting by $Q$, the theory on the D5 branes becomes the boundary theory studied in twisted holography: the chiral algebra subsector of $\mathcal{N}=4$ SYM. Along with our study of the backreaction, this shows that studying the topological string on $\mathcal{O}(-1)^4\oplus\mathcal{O}(1)^2\to\CP^1$ introduces a new holographic duality which encompasses the original story on $\mathcal{O}(-1)^2\to\CP^1$.

\subsection*{The twisted theory}

Twisting open strings in the B-model is an easy procedure, and this example should be straightforward for most familiar with topological strings. Nevertheless, I outline the argument here for completeness.

The holomorphic Chern-Simons action on the stack of D5 branes was presented in equation \ref{eq:twistorstring}. Recall that the gauge field was valued in
\begin{align*}
    \mathcal{A}\in\Omega^{0,1}(\PT,\g)[\theta^i]_{i=1,2,3,4}.
\end{align*}
The theory twists to a theory on the zero locus $Z(Q)\subset\PT$. We will see that in $Q$-cohomology, the gauge field becomes a gauge field valued in $\Omega^{0,1}(\CP^1,\g)[\theta^1,\theta^2]$. Holomorphic Chern-Simons with this field content on a D1 brane wrapping $\CP^1\subset \mathcal{O}(-1)^2\to\CP^1$ matches the 2d chiral algebra studied in the twisted holography setup of \cite{costello2021twistedholography}.

Holomorphic Chern-Simons theory is a theory of $(0,\bullet)$-forms valued in a dg algebra. When we twist, this complex is the Koszul resolution of the locus $v^{\dot{\alpha}}=0$ for $\dot{\alpha}=\dot{1},\dot{2}$.

If we call the Koszul resolution $C$, then it is enough to find an isomorphism of Dolbeault complexes
    \begin{align*}
        \Omega^{0,\bullet}(\PT,C)\cong \Omega^{0,\bullet}(\CP^1)[\theta_1,\theta_2].
    \end{align*}
Consider the complex $B$ given by
\begin{align*}
    0\to\mathcal{O}_{Z(Q)}[\theta_1,\theta_2].
\end{align*}
Then we have an isomorphism of complexes $C\to B$ given by the zero map everywhere except in degree 0 where we have the quotient map $\mathcal{O}_\PT[\theta_1,\theta_2]\to\mathcal{O}_\PT[\theta_1,\theta_2]/Z(Q)\cong \mathcal{O}_{Z(Q)}[\theta_1,\theta_2]$. This induces the isomorphism we want on Dolbeault complexes. So, we see that the twistor uplift of self-dual $\mathcal{N}=4$ twists to the chiral algebra subsector of $\mathcal{N}=4$.

\section{The self-dual $\mathcal{N}=2$ construction}\label{sec:N=2}

I will now introduce a Calabi-Yau 7-fold, which will lead us to a holographic duality with self-dual $\mathcal{N}=2$ on the brane. The geometry we study is the following
\begin{align}\label{eq:N=2_CY7}
    X = \mathcal{O}(0,-1)^2\oplus \mathcal{O}(-2,-2)\to(\CP^1\times\PT)
\end{align}
where my notation on the $\mathcal{O}(\cdot,\cdot)$ factors denotes the projective weight on the $\CP^1$ in the first entry and the second entry is the weight on $\PT$. This is easily seen to be Calabi-Yau since $K_{\CP^1}=\mathcal{O}(2)$ and $K_\PT=\mathcal{O}(4)$.

Let me provide context on the origin of this geometry. In the self-dual $\mathcal{N}=4$ case, we promoted the resolved conifold $\mathcal{O}(-1)^2\to\CP^1$ studied in twisted holography to the CY7 $\mathcal{O}(-1)^4\oplus\mathcal{O}(1)^2\to\CP^1$. In essence, we added 4 more directions so that we would have a Calabi-Yau fibration over twistor space. Here we want to play the same game with the orbifold $\C^2/\Z_2\times\C$. The resolution of the $A_1$ singularity is $T^*\CP^1\times\C$. In promoting this CY3 to a CY7 fibering over twistor space, we land on the geometry shown in equation \ref{eq:N=2_CY7}.

In this setting, we wrap stacks of $N$ D7 branes/anti-branes around $\CP^1\times\PT\subset X$. In the following subsections, I will compute the backreaction and show that it localizes to $SL(2,\C)/\Z_2$ when we introduce the appropriate superpotential. I also describe the theory on the stack of D7 branes and explain why it is the twistor uplift of self-dual $\mathcal{N}=2$ SYM. To wrap up the section, I will state the theory on the brane and comment on it's twist by the dual supercharge to the bulk superpotential.

Throughout the rest of the section, I will denote the twistor coordinates with $v_{\dot{\alpha}},z$ as in the previous section, the second $\CP^1$ factor will have the coordinate $u$, the $\mathcal{O}(-2,-2)$ will be denoted $w_3$, and the $\mathcal{O}(0,-1)^2$ directions will be denoted $w_1,w_2$.

\subsection{The brane construction}

Engineering the brane configurations in this setting is a slightly more delicate task than in the $\mathcal{N}=4$ case. Their construction also affects the backreaction computation in a nontrivial manner. I will first discuss the construction and brane theory here.

Working on the resolution of an $A_1$ singularity, we study a stack of \textit{fractional} D5 branes, which amounts to studying sheaves on the zero section $\CP^1\times\PT\subset X$. The theory we will find on the stack of branes is a holomorphic $\mathcal{N}=2$ quiver gauge theory. The sheaves we study are the following: a stack of D7 branes carrying $N$ copies of the trivial bundle $\mathcal{O}^N$ and a stack of D7 anti-branes carrying the parity shifted bundle $\Pi\mathcal{O}(-1,1)^N$.

This brane configuration leads us to studying two copies of self-dual $\mathcal{N}=2$ gauge theory described by the following quiver

\[\begin{tikzcd}
	\circled{$N$} && \circled{$N$}
	\arrow[curve={height=-18pt}, from=1-1, to=1-3]
	\arrow[curve={height=18pt}, from=1-1, to=1-3]
\end{tikzcd}\]

The origin of the two copies of pure self-dual $\mathcal{N}=2$ gauge theory is explained in subsection \ref{subsec:N=2BRANE}. They come from the twistor construction presented in \cite{Boels:2006ir,costello2013notessupersymmetricholomorphicfield}. Let me now describe the matter fields. They come from the D7-D7 strings propagating between the branes and antibranes. Using T-duality, the theory on the branes involves fields valued in $\Omega^{0,\bullet}(\CP^1\times\PT,\g)[\theta_1,\theta_2,\zeta]$ where $\theta_i\in \Pi\mathcal{O}(0,1)^2$ and $\zeta \in \Pi\mathcal{O}(2,2)$. Then the theory on the stack of branes has fields accompanying $1,\theta_i,\theta^2,\theta_i\zeta,\theta^2\zeta$, whereas the anti-branes has fields with projective weight shifted by 1, so they accompany $\theta_i,\theta^2,\theta_i\zeta$. The resulting hypermultiplet we find is
\begin{align*}
    \Hom(\C^N,\C^N)\otimes (\mathcal{O}(-1)\oplus\mathcal{O}(-2)^2\oplus\mathcal{O}(-3))\oplus \Hom(\C^N,\C^N)\otimes (\mathcal{O}(-1)\oplus\mathcal{O}(-2)^2\oplus\mathcal{O}(-3)).
\end{align*}
Abstractly, the above comes from computing
\begin{align*}
    \Hom(\Pi\mathcal{O}_{\CP^1\times\PT}(-1,1)^N,\mathcal{O}_{\CP^1\times\PT}^N)=\Hom(\C^N,\C^N)\otimes H(\CP^1,\C[\theta_1,\theta_2,\zeta]\otimes\mathcal{O}(1,-1))
\end{align*}
by noticing that the factors accompanying $\zeta\in \mathcal{O}(-1,-3)$ and $\theta^2\zeta\in\mathcal{O}(-1,-5)$ (note that the coordinates are twisted since we tensored with $\mathcal{O}(1,-1)$) vanish since $H^\bullet (\CP^1,\mathcal{O}(-1)) = 0$.

\subsection{The backreaction}\label{subsec:N=2BR}

As explained in the background section \ref{sec:bgr}, we compute the backreaction by first solving the Kodaira-Spencer equation in the presence of a source term. In this setting, however, we have to be careful when writing down the source term. The source term comes from the Chern class of the brane $\text{ch}(\mathcal{O}^N\oplus\Pi\mathcal{O}(-1,1)^N)$ multiplied by the delta function supported on the brane. In our case, we find that
\begin{align*}
    \text{ch}_1(\mathcal{O}^N\oplus\Pi\mathcal{O}(-1,1)^N) = N\frac{dud\bar{u}}{(1+|u|^2)^2}-N\frac{dzd\bar{z}}{(1+|z|^2)^2}.
\end{align*}
and the rest vanish. So, we need to solve
\begin{align*}
    \dbar\alpha+\frac{1}{2}[\alpha,\alpha]+ N\delta_{w_i=0}\frac{dud\bar{u}}{(1+|u|^2)^2}-N\delta_{w_i=0}\frac{dzd\bar{z}}{(1+|z|^2)^2}=0.
\end{align*}
Solving the above is less trivial than the $\mathcal{N}=4$ case because inserting a Bochner-Martinelli type kernel to solve
\begin{align*}
    \dbar\alpha'+ N\delta_{w_i=0}\frac{dud\bar{u}}{(1+|u|^2)^2}=0\quad\text{and}\quad \dbar\alpha''- N\delta_{w_i=0}\frac{dzd\bar{z}}{(1+|z|^2)^2}=0
\end{align*}
separately potentially introduces counterterms from the Schouten bracket. Solving the above separately gives the polyvectors 
\begin{align*}
    \alpha' = -i\frac{2!N}{(2\pi)^3}\frac{\epsilon^{ijk}\bar{w}_id\bar{w}_jd\bar{w}_k}{||w||^6}\frac{d^3wdud\bar{u}}{(1+|u|^2)^2}\qquad \alpha'' = -i\frac{2!N}{(2\pi)^3}\frac{\epsilon^{ijk}\bar{w}_id\bar{w}_jd\bar{w}_k}{||w||^6}\frac{d^3wdzd\bar{z}}{(1+|z|^2)^2}
\end{align*}
Using the volume form $\Omega = u^{-1}dud^3w_idzd^2v_{\dot{\alpha}}$, we can use the isomorphism $PV^{j,i}(X)\cong\Omega^{7-j,i}(X)$ to write our solutions as polyvectors
\begin{align}\label{eq:N=2_BM_ker}
    \alpha'& = -i\frac{2!N}{(2\pi)^3}\frac{\epsilon^{ijk}\bar{w}_id\bar{w}_jd\bar{w}_k}{||w||^6}\frac{ud\bar{u}}{(1+|u|^2)^2}\partial_z\partial_{v_{\dot{1}}}\partial_{v_{\dot{2}}}\\ \nonumber\alpha''& = -i\frac{2!N}{(2\pi)^3}\frac{\epsilon^{ijk}\bar{w}_id\bar{w}_jd\bar{w}_k}{||w||^6}\frac{ud\bar{z}}{(1+|z|^2)^2}\partial_u\partial_{v_{\dot{1}}}\partial_{v_{\dot{2}}}.
\end{align}
Notice that we are technically abusing notation by leaving the names of $\alpha'$ and $\alpha''$ unchanged. Building a solution by summing $\alpha'$ and $\alpha''$ would generate a term of the form $[\alpha',\alpha'']$ in the Maurer-Cartan equation. But one then easily sees that such a term vanishes because it would have $4$ antiholomorphic forms in the $w_i$ directions. So, we see that the backreaction of our stack of branes sources the flux $\alpha = \alpha'+\alpha''$.

\subsection*{Localization}

We are interested in localizing to the critical locus of a superpotential, which we will choose to be
\begin{align*}
    W = zw_1+v_{\dot{2}}w_2.
\end{align*}
This choice discards the $z$-dependence of our solution, which means we can drop $\alpha''$ in the following analysis. Notice that in this setting we are localizing to one of the fibre planes rather than a twistor line as we did in the $\mathcal{N}=4$ case.

Our choice of superpotential localizes us to the Calabi-Yau 3-fold used to study the twisted holography dual of the chiral algebra subsector of $\mathcal{N}=2$ SYM is $T^*\CP^1\times\C$ (or equivalently, $\C^2/\Z_2\times\C$). We then modify the source equation by\footnote{Note that we have dropped the source term depending on $z$ due to our analysis earlier in the section.}
\begin{align*}
    \dbar\alpha +\frac{1}{2}[\alpha,\alpha]+[W,\alpha] + N\delta_{w_i=0}\frac{dud\bar{u}}{(1+|u|^2)^2}=0.
\end{align*}
As in the $\mathcal{N}=4$ story, we will look for a solution of the form $\alpha=\delta\alpha+\delta^{(2)}\alpha+\delta^{(3)}\alpha$ where the first-order contribution is given by the Bochner-Martinelli kernel displayed in equation \ref{eq:N=2_BM_ker}. Plugging $\alpha$ into the modified source equation leads us to consider differential equations of the same form as in equation \ref{eq:higher_order_ODE}.

At second-order, we find the correction
\begin{align*}
    \delta^{(2)}\alpha = i\frac{2!N}{2(2\pi)^3}\frac{\epsilon_{1ij}\bar{w}^id\bar{w}^j}{||w||^4}\frac{ud\bar{u}}{(1+|u|^2)^2}\partial_{v_{\dot{1}}}\partial_{v_{\dot{2}}}+i\frac{2!N}{2(2\pi)^3}\frac{\epsilon_{i2j}\bar{w}^id\bar{w}^j}{||w||^4}\frac{ud\bar{u}}{(1+|u|^2)^2}\partial_{z}\partial_{v_{\dot{1}}}.
\end{align*}
To check that this solves the correct differential equation $\dbar\delta^{(2)}\alpha+[W,\delta\alpha]=0$, we apply the calculations in appendix \ref{app:BM_calc} to find
\begin{align*}
    \dbar\delta^{(2)}\alpha = i\frac{2!N}{(2\pi)^3}\frac{\epsilon^{ijk}\bar{w}_id\bar{w}_jd\bar{w}_k}{||w||^6}\frac{ud\bar{u}}{(1+|u|^2)^2}\partial_{v_{\dot{1}}}\partial_{v_{\dot{2}}}w_1+i\frac{2!N}{(2\pi)^3}\frac{\epsilon^{ijk}\bar{w}_id\bar{w}_jd\bar{w}_k}{||w||^6}\frac{ud\bar{u}}{(1+|u|^2)^2}\partial_z\partial_{v_{\dot{1}}}w_2.
\end{align*}
Which cancels against
\begin{align*}
    [W,\delta\alpha] = -i\frac{2!N}{(2\pi)^3}\frac{\epsilon^{ijk}\bar{w}_id\bar{w}_jd\bar{w}_k}{||w||^6}\frac{ud\bar{u}}{(1+|u|^2)^2}\partial_{v_{\dot{1}}}\partial_{v_{\dot{2}}}w_1-i\frac{2!N}{(2\pi)^3}\frac{\epsilon^{ijk}\bar{w}_id\bar{w}_jd\bar{w}_k}{||w||^6}\frac{ud\bar{u}}{(1+|u|^2)^2}\partial_z\partial_{v_{\dot{1}}}w_2.
\end{align*}
The next correction is given by
\begin{align*}
    \delta^{(3)}\alpha=-i\frac{2!N}{4(2\pi)^3}\frac{\bar{w}^3}{||w||^2}\frac{ud\bar{u}}{(1+|u|^2)^2}\partial_{v_{\dot{1}}}.
\end{align*}
The $\dbar$-operator applied to this gives
\begin{align*}
    \dbar\delta^{(3)}\alpha=-i\frac{2!N}{4(2\pi)^3}\frac{\epsilon_{1ij}\bar{w}^id\bar{w}^j}{||w||^4}\frac{ud\bar{u}}{(1+|u|^2)^2}\partial_{v_{\dot{1}}}w_2-i\frac{2!N}{4(2\pi)^3}\frac{\epsilon_{i2j}\bar{w}^id\bar{w}^j}{||w||^4}\frac{ud\bar{u}}{(1+|u|^2)^2}\partial_{v_{\dot{1}}}w_1.
\end{align*}
It is easy to see that this is equal to $-[W,\delta^{(2)}\alpha]$. The last piece to check is that $[W,\delta^{(3)}\alpha]=0$, but this follows trivially since $\delta^{(3)}\alpha$ has no derivatives in the directions of $W$.

We have thus shown that 
\begin{align*}
    \alpha = i\frac{2\cdot 2!N}{(2\pi)^3}\left(-\frac{\epsilon^{ijk}\bar{w}_id\bar{w}_jd\bar{w}_k}{||w||^6}\partial_z\partial_{v_{\dot{2}}}+\frac{\epsilon_{1ij}\bar{w}^id\bar{w}^j}{4||w||^4}\partial_{v_{\dot{2}}}+\frac{\epsilon_{i2j}\bar{w}^id\bar{w}^j}{4||w||^4}\partial_{z}-\frac{\bar{w}^3}{8||w||^2}\right)\frac{ud\bar{u}}{(1+|u|^2)^2}\partial_{v_{\dot{1}}}
\end{align*}
solves the modified source equation. If we localize to the critical locus of $W$, then we find 
\begin{align}\label{eq:SL/Z2}
    -i\frac{N}{2(2\pi)^3}\frac{\bar{w}^3}{||w||^2}\frac{ud\bar{u}}{(1+|u|^2)^2}\partial_{v_{\dot{1}}}.
\end{align}

\subsection*{Finding $SL(2,\C)/\Z_2$} I claimed earlier that this choice of superpotential would localize our backreacted 7d geometry to the backreaction on $\C^2/\Z_2\times\C$. We know that the backreaction of a stack of $N$ D1 branes wrapping $\C\subset \C^2/\Z_2\times\C$ deforms the complex structure to $SL(2,\C)/\Z_2$. I will now show that the Beltrami differential derived in equation \ref{eq:SL/Z2} gives rise to this geometry.

I follow the same procedure used to find $SL(2,\C)$ from the original twisted holography program \cite{costello2021twistedholography}, which was reviewed in \ref{sec:bgr}. The $\dbar$-operator gets deformed to 
\begin{align*}
    \dbar\to\bar{D}=\dbar+ \frac{N}{2(2\pi)^3}\frac{ud\bar{u}}{t(1+|u|^2)^2}\partial_{v}
\end{align*}
where I will now call the $\mathcal{O}(-2)$ coordinate $t$ rather than $w^3$ and I drop the index on $v$. Let's find the holomorphic functions in this complex structure. The projective weight of $t$ tells us that we can have the following holomorphic functions
\begin{align*}
    f_1 = t\qquad f_2 = tu\qquad f_3 = tu^2
\end{align*}
which leads us to construct 3 more functions which are linear in $N$
\begin{gather*}
    f_4 = vt-\frac{N}{2(2\pi)^3}\frac{1+2|u|^2}{(1+u\bar{u})}\qquad f_5 = vtu-\frac{N}{2(2\pi)^3}\frac{u(1+2|u|^2)}{1+u\bar{u}}\\ f_6 = vtu^2-\frac{N}{2(2\pi)^3}\frac{u^2(1+2|u|^2)}{1+u\bar{u}}.
\end{gather*}
It is not hard to see that these are holomorphic in the new complex structure. I check this explicitly in appendix \ref{app:SL(2,C)/Z2calc}.

Next, we want to find holomorphic functions that depend quadratically in $N$. Finding such functions is less obvious. They can be systematically found by applying the $\bar{D}$-operator to $v^2f_1,v^2f_2,v^2f_3$ and solving for correction terms order-by-order in $N$ until one arrives at a holomorphic function. Here, I will give the resulting functions from such a computation and check that they are indeed annihilated by $\bar{D}$. The order $N^2$ functions are
\begin{align*}
    f_7 & = v^2t -\frac{N}{(2\pi)^3}\frac{v(1+2|u|^2)}{(1+u\bar{u})}+\frac{N^2}{8(2\pi)^6}\frac{(1+2|u|^2)^2}{t(1+|u|^2)^2}\\
    f_8 & = v^2tu -\frac{N}{(2\pi)^3}\frac{vu(1+2|u|^2)}{(1+u\bar{u})}+\frac{N^2}{8(2\pi)^6}\frac{u(1+2|u|^2)^2}{t(1+|u|^2)^2}\\ 
    f_9 & = v^2tu^2 -\frac{N}{(2\pi)^3}\frac{vu^2(1+2|u|^2)}{(1+u\bar{u})}+\frac{N^2}{8(2\pi)^6}\frac{u^2(1+2|u|^2)^2}{t(1+|u|^2)^2}
\end{align*}
whose $\bar{D}$-holomorphicity is verified in appendix \ref{app:SL(2,C)/Z2calc}. We have thus found a family of 9 holomorphic functions in the complex structure defined by $\bar{D}$. Just like how we found $SL(2,\C)$ from the backreaction in twisted holography, we will look for relations on the $f_i$ that match the ring of holomorphic functions on $SL(2,\C)/\Z_2$. The relations are the following:
\begin{align}\label{eq:SL/Z2_rel}
    f_2^2 = f_1f_3 \qquad f_8^2 = f_7f_9\qquad f_4^2 = f_1f_7 \qquad f_5^2 = f_3f_7\qquad f_6^2 = f_3f_9\qquad f_4f_2 = f_1f_5 .
\end{align}
I check these explicitly in appendix \ref{app:SL(2,C)/Z2calc}. This then gives a map to the space of $\Z_2$-invariants on $SL(2,\C)/\Z_2$. Indeed, if we consider a general matrix element specified by its entries $a,b,c,d$ subject to the $SL(2,\C)$ relation $ad-bc=N$, then the $\Z_2$ invariants are quadratic terms in these elements. There are ten such elements where one can be removed using $ad-bc=N$. We are thus left with the following identification which is consistent with the relations we found above in equation \ref{eq:SL/Z2_rel}:
\begin{gather*}
    f_1 = a^2 \qquad f_2 = ab \qquad f_3 = b^2 \qquad f_4 = ac \qquad f_5 = bc \\ f_6 = bd \qquad f_7 = c^2\qquad  f_8 = cd\qquad f_9 = d^2.
\end{gather*}

\subsection*{The relation to $SO(3,\C)$} 

I have found an explicit matching between holomorphic functions in the presence of a certain Beltrami differential and $\Z_2$-invariant functions on $SL(2,\C)$. There is another viewpoint we can take here via the isomorphism $SL(2,\C)/
\Z_2\cong SO(3,\C)$.

In particular, we can build a $3\times3$ matrix out of our nine functions $f_i$ in the following manner
\begin{align*}
    R=
\begin{pmatrix}
\dfrac{f_1-f_3-f_7+f_9}{2}
&
\dfrac{i\,(f_1-f_3+f_7-f_9)}{2}
&
-f_4+f_6
\\[10pt]
\dfrac{i\,(-f_1-f_3+f_7+f_9)}{2}
&
\dfrac{f_1+f_3+f_7+f_9}{2}
&
i\,(f_4+f_6)
\\[10pt]
-f_2+f_8
&
i\,(-f_2-f_8)
&
1+2f_5
\end{pmatrix}.
\end{align*}
One can then check that the relations on the functions $f_i$ imply $R^TR=N\mathbb{1}$. This is another way we can see the backreacted geometry from the presence of the Beltrami differential.

\subsection{The theory on the brane}\label{subsec:N=2BRANE}

In the same spirit as the other brane theories studied in this paper, the action takes the form
\begin{align*}
    \int_{\tilde{X}}\Omega\wedge\omega\wedge d^2\theta_i\wedge d\zeta\wedge \text{hCS}(\mathcal{A})
\end{align*}
where 
\begin{align*}
    \tilde{X}=\Pi\mathcal{O}(0,1)^2\oplus\Pi\mathcal{O}(2,2)\to(\CP^1\times\PT)\quad\text{and}\quad \mathcal{A}\in\Omega^{0,1}(\CP^1\times\PT,\g)[\theta_1,\theta_2,\zeta].
\end{align*}
Note that $\Omega$ is the $\PT$ volume form, $\omega$ is the $\CP^1$ volume form, and we have named coordinates $\theta^i\in \Pi\mathcal{O}(1,1)^2$ and $\zeta\in \Pi\mathcal{O}(0,2)$. This is the twistor uplift of self-dual $\mathcal{N}=2$ when we compactify the second $\CP^1$ and integrate out $\zeta$ as discussed in \cite{Boels:2006ir}.

The dual computation of the bulk localization involves twisting by the supercharge
\begin{align*}
    Q=z\partial_{\theta_1}+v_{\dot{2}}\partial_{\theta_2}.
\end{align*}
Repeating the arguments I made for the twist of the self-dual $\mathcal{N}=4$ case, this theory is mapped to the chiral algebra subsector of $\mathcal{N}=2$.

\section{Supersymmetric localization and matrix models}\label{sec:MM}

In this section, I consider matrix model subsectors of the boundary theories in our story and connect them to famous constructions of matrix models in the literature. 

\subsection{Supersymmetric localization on $S^4$}\label{subsec:MM-SL}

In his famous paper \cite{Pestun_2012}, Pestun computed $\mathcal{N}=4$ Wilson loops on $S^4$ using supersymmetric localization. Choosing a supercharge that squared to rotations, he localized the theory to a Gaussian matrix model on the north and south poles of the $S^4$ in such a way that Wilson loops on the equator were equal to a certain correlator in the matrix model.

There is a connection to the story we discussed in section \ref{sec:N=4}. If we consider the same setup on compactified twistor space $\CP^3$ (which is the twistor space of $S^4$), we can study the B-model on the Calabi-Yau
\begin{align*}
    X=\mathcal{O}(-1)^4\to\CP^3.
\end{align*}
We then consider a stack of D5 branes wrapping $\CP^3\subset X$. The theory on the brane is then the twistor uplift of self-dual $\mathcal{N}=4$ on the twistor space of $S^4$.

If we give $\CP^3$ the homogeneous coordinate $[Z_0:Z_1:Z_2:Z_3]$, then the analogous twist to the one performed in section \ref{sec:N=4} is concerned with the supercharge
\begin{align*}
    Q=Z_2\partial_{w_3}+Z_3\partial_{w_4}.
\end{align*}
This localizes the theory on the stack of branes to the chiral algebra subsector of $\mathcal{N}=4$ living on the twistor line at the origin of $S^4$. Choosing the supercharge $Q = Z_0\partial_{w_3}+Z_1\partial_{w_4}$ localizes the theory to the twistor line at the point at infinity on $S^4$. The chiral algebras living at the poles of the $S^4$ are illustrated in Fig. \ref{fig:twist_to_MM}.

\begin{figure}[h]
    \centering
    \includegraphics[width=0.95\linewidth]{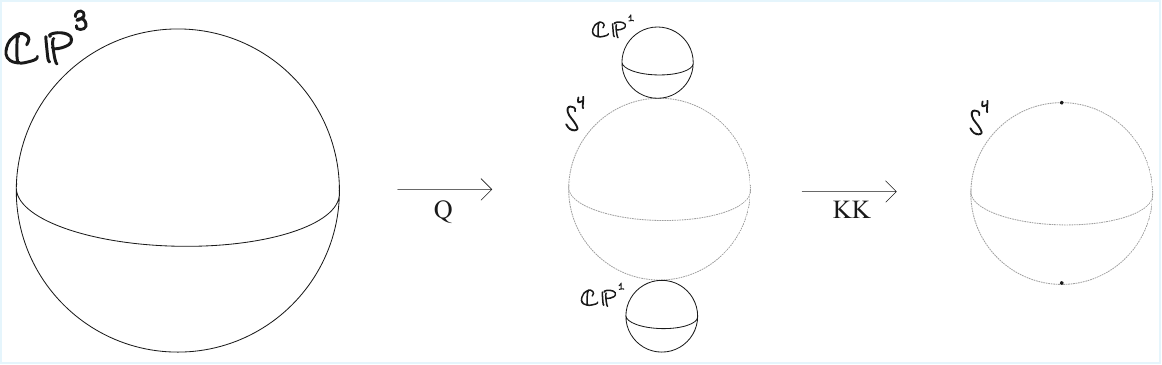}
    \caption{An illustration of the steps taken to arrive at the Pestun matrix model localization from the twistor uplift of self-dual $\mathcal{N}=4$. We start with the twistor uplift of self-dual $\mathcal{N}=4$ SYM, which we twist by the two choices of $Q$ to get two copies of the 2d chiral algebra subsector of $\mathcal{N}=4$ at the north and south poles of $S^4$. Then we KK compactify the 2d chiral algebras to arrive at Gaussian matrix models located at the north and south poles of $S^4$.}
    \label{fig:twist_to_MM}
\end{figure}

We thus arrive at a picture similar to the Pestun localization story, with the matrix models at the poles upgraded to 2d chiral algebras on Riemann spheres. Where the connection becomes explicit is in the matrix model subsector of the 2d chiral algebra found in \cite{Budzik_2023}. In \cite{Budzik_2023}, it was shown that one can compactify the chiral algebras to a Gaussian matrix model by only considering a specific subsector of states. One should also be able to obtain the zero-dimensional theory as a second twist of the 2d chiral algebra. This was done starting from the full supersymmetric theory in 4d in \cite{Drukker_2009}.

It would be nice if one could make this connection between the work of Pestun and this twistor space construction precise. I leave this for future work. It would also be interesting if one could compute the gravitational duals of the matrix model correlators considered by Pestun using this setup.

\subsection{The Dijkgraaf-Vafa matrix model}\label{subsec:MM-DV}

In the previous subsection, I argued that the Pestun localization on $S^4$ could be interpreted as the matrix model subsector of $\mathcal{N}=4$ found in \cite{Budzik_2023}. Here, I connect both stories to the Dijkgraaf-Vafa matrix model construction \cite{Dijkgraaf_2002}. My approach will be to show that the twisted holography setting studied in \cite{Budzik_2023,costello2021twistedholography} naturally emerges from the Dijkgraaf-Vafa construction.

In their paper, Dijkgraaf and Vafa \cite{Dijkgraaf_2002} study the B-model on the Calabi-Yau 3-fold\footnote{Notice that this is the same background that we landed on when localizing the bulk dual in the self-dual $\mathcal{N}=2$ story before the backreaction.}
\begin{align*}
    T^*\CP^1\times\C.
\end{align*}
Let's call the $\CP^1$ base coordinate $z$, $w_1$ on $\mathcal{O}(-2)$, and $w_2$ on $\C$.

\cite{Dijkgraaf_2002} wrap $N$ D1 branes around $\CP^1\subset T^*\CP^1\times\C$ and study the gauged $\beta\gamma$ system on the branes. Please see appendix \ref{app:MM-Beltrami} for a slow derivation of the brane action and for the naming conventions I use in this section. By introducing a Beltrami-differential of the form
\begin{align}\label{eq:MM-Beltrami}
    \mu = \frac{d\bar{z}}{(1+|z|^2)^2}\partial_{w_1}P(w_1)\partial_{w_2}
\end{align}
in the bulk geometry, for an arbitrary polynomial $P$, the theory on the brane becomes
\begin{align*}
    \int_{\CP^{1}}\Tr(\Phi_1\bar{D}\Phi_2+\omega\wedge P(\Phi_2)).
\end{align*}
Where $\bar{D}=\dbar+[A,\cdot]$ for a partial connection $A$. Dijkgraaf and Vafa then compactify this theory to the zero modes, resulting in the matrix model action
\begin{align*}
    P(\Phi_2)
\end{align*}
where $\Phi_2\in U(N)$ now denotes the constant zero mode. The special case where $P(w_1) = w_1^2$, is carefully worked out in appendix \ref{app:MM-Beltrami}. This case is important because it yields a Gaussian matrix model, which was also found in the 2d chiral algebra subsector of $\mathcal{N}=4$ \cite{Budzik_2023}.

Let me now connect these two stories. In the $\mathcal{N}=4$ case, the bulk geometry was
\begin{align*}
    \mathcal{O}(-1)^2\to\CP^1.
\end{align*}
Showing that the presence of the Beltrami given in equation \ref{eq:MM-Beltrami} with $P(w_1) = w_1^2$ deforms $T^*\CP^1\times\C$ to $\mathcal{O}(-1)^2\to\CP^1$ is sufficient. To understand why this is true, we can think of this choice of Beltrami differential as changing the geometry that Dijkgraaf and Vafa study to the one studied by \cite{Budzik_2023}. So, compactifying the brane theory in each setup is equivalent.

So, we conclude with
\begin{proposition}
    Introducing the Beltrami differential
    \begin{align*}
        \mu = 2\frac{d\bar{z}}{(1+|z|^2)^2}w_1\partial_{w_2}
    \end{align*}
    deforms the holomorphic vector bundle $T^*\CP^1\times\C$ to $\mathcal{O}(-1)^2\to\CP^1$.
\end{proposition}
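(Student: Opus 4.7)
The plan is to adapt the strategy used earlier in the paper for the $SL(2,\C)/\Z_2$ analysis: solve the deformed $\bar{D}=\dbar+\mu$-holomorphicity condition on each patch of $\CP^1$ to obtain explicit local holomorphic coordinates in the new complex structure, and then compare the induced overlap transitions with those of the resolved conifold $\mathcal{O}(-1)^2 \to \CP^1$.

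On the $U_0$ patch with coordinates $z, w_1, w_2$, the Beltrami $\mu$ is a vertical vector field in the $w_2$-direction, so $z$ and $w_1$ remain $\bar{D}$-holomorphic. To correct $w_2$, I would solve the inhomogeneous equation $\partial_{\bar{z}} c = -2/(1+|z|^2)^2$, whose smooth primitive $c = -2\bar{z}/(1+|z|^2)$ yields $\tilde{w}_2 = w_2 - \tfrac{2\bar{z}}{1+|z|^2} w_1$ as a $\bar{D}$-holomorphic fibre coordinate. I would then repeat the analysis on $U_\infty$ using the bundle transitions $z = 1/z'$, $w_1 = z'^2 w_1'$, $w_2 = w_2'$, producing analogous $\bar{D}$-holomorphic coordinates $(z', w_1', \tilde{w}_2^{(\infty)})$ on the other patch.

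The key step is the comparison on the overlap $U_0\cap U_\infty$: substituting the bundle transitions into $\tilde{w}_2$ and $\tilde{w}_2^{(\infty)}$, one reads off the fibre transition matrix for the deformed bundle. The desired identification with $\mathcal{O}(-1)^2$ amounts to checking that this new transition matrix is gauge-equivalent to the diagonal $\mathrm{diag}(z,z)$ transition of $\mathcal{O}(-1)\oplus\mathcal{O}(-1)$, rather than to the $\mathrm{diag}(z^2, 1)$ of the undeformed $\mathcal{O}(-2)\oplus\mathcal{O}$. As a concrete cross-check, and in the style of the $SL(2,\C)/\Z_2$ computation earlier in the paper, one can verify that the ring of global $\bar{D}$-holomorphic functions on the deformed total space reproduces the conifold relation $xy - zw = 0$ on four generators, rather than the $A_1$ relation $ac - b^2 = 0$ that describes the undeformed contraction $\C^2/\Z_2\times\C$.

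The main obstacle is demonstrating the cohomological non-triviality of $\mu$: one must verify that its Dolbeault class represents a non-zero element of the relevant off-diagonal piece of $H^1(\CP^1, \op{End}(\mathcal{O}(-2)\oplus\mathcal{O}))$, since otherwise a smooth global change of frame would absorb the Beltrami and leave the holomorphic bundle unchanged. This is the part of the computation requiring the most care, as the local primitive $c$ must be shown not to patch into a global smooth section of the relevant line bundle on $\CP^1$. Once non-triviality is established, uniqueness of the non-trivial extension classified by $\op{Ext}^1(\mathcal{O}, \mathcal{O}(-2))\cong \C$ immediately pins down the deformed bundle as $\mathcal{O}(-1)^2$ and hence identifies the deformed total space with the resolved conifold.
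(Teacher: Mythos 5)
Your overall strategy (patchwise $\bar D$-holomorphic coordinates with smooth primitives, comparison of transition matrices, and a cohomological identification of the deformation class) is the right framework, and is in fact more careful than the paper's own patching argument. But the decisive step, as you have set it up, fails: you have placed the class of $\mu$ in the wrong off-diagonal block. With the paper's labels ($w_1$ the $\mathcal{O}(-2)$ fibre coordinate, $w_2$ the coordinate on the trivial $\C$ factor), a Beltrami of the form $\beta\,w_1\partial_{w_2}$ deforms the $\dbar$-operator on sections $(s_1,s_2)$ by $(s_1,s_2)\mapsto(0,\beta s_1)$, i.e.\ by a class in $\Omega^{0,1}\bigl(\CP^1,\Hom(\mathcal{O}(-2),\mathcal{O})\bigr)=\Omega^{0,1}(\CP^1,\mathcal{O}(2))$ --- it keeps the trivial factor $\{w_1=0\}$ holomorphic and shifts the quotient --- not a class in $\op{Ext}^1(\mathcal{O},\mathcal{O}(-2))=H^1(\CP^1,\mathcal{O}(-2))$ as you assert. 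Since $H^1(\CP^1,\mathcal{O}(2))=0$, the non-triviality you defer to the end cannot be established: your own primitive $c=-2\bar z/(1+|z|^2)$ is precisely the $U_0$ representative of a \emph{global} smooth section of $\mathcal{O}(2)$ (on the other patch it reads $-2z'^3/(1+|z'|^2)$, smooth at $z'=0$), and conjugating by $\begin{pmatrix}1&0\\ c&1\end{pmatrix}$ absorbs $\mu$ entirely.

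Concretely, if you carry out your own steps with smooth primitives on both patches you find $\tilde w_2^{(\infty)}=\tilde w_2^{(0)}+\tfrac{2}{z}w_1$, i.e.\ transition matrix $\begin{pmatrix}z^2&0\\ 2/z&1\end{pmatrix}$, which is reduced to $\mathrm{diag}(z^2,1)$ by the left multiplication $\begin{pmatrix}1&0\\-2z'^3&1\end{pmatrix}$, holomorphic and invertible on all of $U_\infty$; correspondingly the global functions linear on the fibres are $w_1,\,zw_1,\,z^2w_1$ and $w_2-\tfrac{2\bar z}{1+|z|^2}w_1$, obeying the $A_1\times\C$ relation rather than the conifold relation. (A quick invariant check: $\mu$ vanishes on $\{w_1=0\}$, so the trivial sub-bundle survives and its rational curves still move in a one-parameter family, which is impossible in $\mathcal{O}(-1)^2$.) So your plan, executed correctly, returns $T^*\CP^1\times\C$ again; the deformation that does yield the resolved conifold is the one pairing the trivial-factor coordinate with the cotangent direction, $\mu\propto\frac{d\bar z}{(1+|z|^2)^2}\,w_2\,\partial_{w_1}$, whose class generates $H^1\bigl(\CP^1,\Hom(\mathcal{O},\mathcal{O}(-2))\bigr)\cong\C$; with that orientation your argument goes through verbatim (the transition becomes $\begin{pmatrix}z^2&2z\\0&1\end{pmatrix}$, which has no global holomorphic sections and is therefore $\mathcal{O}(-1)^2$). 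Be aware that the paper's own proof does not resolve this point: its corrected coordinate $w_2+\frac{2w_1}{z(1+|z|^2)}$ is singular at $z=0$ and its frame change $\mathrm{diag}(1/z,z)$ is singular at the other pole, so the displayed transition matrix $G$ is not the transition of the deformed bundle in admissible frames. Your cohomological test is exactly the right tool here, but aimed at the correct $\Hom$-block it shows that the statement needs the roles of $w_1$ and $w_2$ interchanged.
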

\begin{proof}
    Recall that Beltrami differentials deform the complex structure by $\dbar\mapsto\dbar+\mu$. The holomorphic functions in this complex structure are then $z$, $w_1$, and
\begin{align*}
    \tilde{w}_2 = w_2+\frac{2w_1}{z(1+|z|^2)}.
\end{align*}
The holomorphic coordinates on our vector bundle are now $w_1,\tilde{w}_2$. On the other coordinate patch, our coordinates transform as
\begin{align*}
    z^2w_1\qquad w_2+\frac{2z^3|z|^2w_1}{1+|z|^{2}}.
\end{align*}
So, we can describe the deformed bundle by the transition map
\begin{align*}
    G = \begin{pmatrix}
        z^2&0\\
        \frac{2(z^3|z|^2-z^{1})}{1+|z|^{2}}&1
    \end{pmatrix}.
\end{align*}
In other words, our bundle has changed from being defined by $\text{diag}(z^2,1)$ to $G$. 

To see that this transition map defines $\mathcal{O}(-1)^2$, I will write the explicit gauge transformation. Call the bottom left entry of $G$, $g$. Then we can consider
\begin{align*}
    u_0 = \begin{pmatrix}
        1&0\\ g&1
    \end{pmatrix}\qquad u_\infty =\begin{pmatrix}
        1/z&0\\0&z
    \end{pmatrix}.
\end{align*}
It is then easy to see that $u_\infty Gu_0^{-1} = z\mathbb{1}$ which is the transition function defining $\mathcal{O}(-1)^2$.
\end{proof}

We thus see that turning on a specific Beltrami differential in the class studied in \cite{Dijkgraaf_2002} deforms the complex structure to that of the resolved conifold. A natural extension of this finding would be to understand how the backreacted geometry, $SL(2,\C)$, enters when we start with $T^*\CP^1\times\C$ in the presence of the Beltrami given in equation \ref{eq:MM-Beltrami}. One would expect to recover the findings of \cite{Budzik_2023} when performing a Dijkgraaf-Vafa type analysis of the matrix model.

\subsection*{Turning on bulk noncommutativity}

To wrap up the discussion, I give an alternative perspective on the Dijkgraaf-Vafa matrix model derivation. This is somewhat orthogonal to the main body of this subsection, but I leave it here as an interesting remark.

I interpret the Dijkgraaf-Vafa derivation of the matrix model action as introducing a bulk closed string field that turns on noncommutativity. Consider the bulk bivector
\begin{align*}
    \mu = \partial_{w_1}\partial_z.
\end{align*}
Introducing such a field corresponds to making the bulk geometry noncommutative \cite{Costello:2019jsy}. The dual supercharge is $Q = \theta_1\partial_z$. As we have seen in other examples, twisting by $Q$ renders the brane field independent of $z$, transforming it into a zero-dimensional field. The holomorphic Chern-Simons field can then be expanded in the surviving fermionic direction $\theta_2$ with constant coefficients. These coefficients are the field of our matrix model and its ghost with action given by the chosen polynomial $P$ in the Dijkgraaf-Vafa construction. 

\appendix

\section{Bochner-Martinelli manipulations}\label{app:BM_calc}

In sections \ref{sec:N=4} and \ref{sec:N=2}, I invert the $\dbar$-operator on certain Bochner-Martinelli type kernels. I perform some of these calculations explicitly here. First, we use
\begin{align*}
    \dbar&\left(\frac{\epsilon_{1ijk}\bar{w}^id\bar{w}^jd\bar{w}^k}{||w||^6} \right)  = 2\dbar\left(\frac{\bar{w}^2d\bar{w}^3d\bar{w}^4}{||w||^6} +\frac{\bar{w}^3d\bar{w}^4d\bar{w}^2}{||w||^6}+\frac{\bar{w}^4d\bar{w}^2d\bar{w}^3}{||w||^6}\right)\\
    & = -\frac{6}{||w||^8}(\bar{w}^2d\bar{w}^3d\bar{w}^4d\bar{w}^1w_1+\bar{w}^3d\bar{w}^4d\bar{w}^2d\bar{w}^1w_1+\bar{w}^4d\bar{w}^2d\bar{w}^3d\bar{w}^1w_1\\
    &\quad\quad +\bar{w}^2d\bar{w}^3d\bar{w}^4d\bar{w}^2w_2+\bar{w}^3d\bar{w}^4d\bar{w}^2d\bar{w}^3w_3+\bar{w}^4d\bar{w}^2d\bar{w}^3d\bar{w}^4w_4)+\frac{\epsilon_{1ijk}d\bar{w}^id\bar{w}^jd\bar{w}^k}{||w||^6}\\
    & = -\frac{6}{||w||^8}(\bar{w}^2d\bar{w}^3d\bar{w}^4d\bar{w}^1w_1+\bar{w}^3d\bar{w}^4d\bar{w}^2d\bar{w}^1w_1+\bar{w}^4d\bar{w}^2d\bar{w}^3d\bar{w}^1w_1-\bar{w}^1d\bar{w}^2d\bar{w}^3d\bar{w}^4w_1)\\
    & = 3\frac{\epsilon_{ijkl}\bar{w}^id\bar{w}^jd\bar{w}^kd\bar{w}^l}{||w||^8}w_1.
\end{align*}
A similar calculation holds when one fixes different indices on the $\epsilon$ tensor. Next, we make use of
\begin{align*}
    \dbar\left(\frac{\epsilon_{12ij}\bar{w}^id\bar{w}^j}{||w||^4} \right) & = \dbar\left(\frac{\bar{w}^3d\bar{w}^4-\bar{w}^4d\bar{w}^3}{||w||^4}\right)\\
    & = -\frac{2}{||w||^6}(\bar{w}^3w_1d\bar{w}^1d\bar{w}^4-\bar{w}^4w_1d\bar{w}^1d\bar{w}^3+\bar{w}^3w_2d\bar{w}^2d\bar{w}^4-\bar{w}^4w_2d\bar{w}^2d\bar{w}^3\\
    &\quad\quad+\bar{w}^3w_3d\bar{w}^3d\bar{w}^4-\bar{w}^4w_4d\bar{w}^4d\bar{w}^3)+\frac{d\bar{w}^3d\bar{w}^4-d\bar{w}^4d\bar{w}^3}{||w||^4}\\
    & = -\frac{2}{||w||^6}(\bar{w}^3w_1d\bar{w}^1d\bar{w}^4-\bar{w}^4w_1d\bar{w}^1d\bar{w}^3+\bar{w}^3w_2d\bar{w}^2d\bar{w}^4-\bar{w}^4w_2d\bar{w}^2d\bar{w}^3)\\
    & = 2\frac{\epsilon_{1ijk}\bar{w}^id\bar{w}^jd\bar{w}^k}{||w||^6}w_2+2\frac{\epsilon_{i2jk}\bar{w}^id\bar{w}^jd\bar{w}^k}{||w||^6}w_1.
\end{align*}
A similar identity holds when fixing different indices on the $\epsilon$ tensor.

\section{$SL(2,\C)/\Z_2$ calculations}\label{app:SL(2,C)/Z2calc}

In this appendix, I explicitly check the claims I made about holomorphic functions in the complex structure determined by $\bar{D}$ in section \ref{sec:N=2}.

\subsection{Verifying holomorphicity}

Holomorphicity of $f_1,f_2,f_3$ is trivial, so I focus on the other cases.

Recall that the new $\dbar$ operator was defined by
\begin{align*}
    \bar{D}=\dbar+ \frac{N}{4(2\pi)^3}\frac{ud\bar{u}}{t(1+|u|^2)^2}\partial_{v}.
\end{align*}
Only the $d\bar{u}$ direction of the derivative appears in these calculations, so I will not write the form part in the rest of this appendix. The holomorphicity of $f_4$ is then ensured by
\begin{align*}
    \bar{D}vt & = \frac{N}{4(2\pi)^3}\frac{u}{(1+|u|^2)^2}\qquad -\frac{N}{4(2\pi)^3}\bar{D}\frac{1+2|u|^2}{(1+u\bar{u})}=-\frac{N}{4(2\pi)^3}\frac{u}{(1+u\bar{u})^2}
\end{align*}
using the elementary derivative
\begin{align*}
    \partial_{\bar{u}}\frac{1+2|u|^2}{(1+u\bar{u})} = \frac{2u(1+|u|^2)-u(1+2|u|^2)}{(1+|u|^2)^2} = \frac{u}{(1+|u|^2)^2}.
\end{align*}
The holomorphicity of $f_5$ and $f_6$ are verified by an almost identical computation. We also have that $\bar{D}f_7 = 0$ since
\begin{align*}
    \bar{D}v^2t & = \frac{N}{4(2\pi)^3}\frac{2vu}{(1+|u|^2)^2}\\
    -\frac{N}{2(2\pi)^3}\bar{D}\frac{v(1+2|u|^2)}{(1+u\bar{u})}&=-\frac{N}{2(2\pi)^3}\frac{vu}{(1+u\bar{u})^2}-\frac{N^2}{8(2\pi)^6}\frac{u(1+2|u|^2)}{t(1+u\bar{u})^3}\\
    \frac{N^2}{16(2\pi)^6}\bar{D}\frac{(1+2|u|^2)^2}{t(1+|u|^2)^2} & = \frac{N^2}{16(2\pi)^6}\frac{2u(1+2|u|^2)}{t(1+|u|^2)^3}
\end{align*}
using 
\begin{align*}
    \partial_{\bar{u}}\frac{(1+2|u|^2)^2}{(1+|u|^2)^2} = \frac{4u(1+2|u|^2)(1+|u|^2)^2-2u(1+2|u|^2)^2(1+|u|^2)}{(1+|u|^2)^4} = \frac{2u(1+2|u|^2)}{(1+|u|^2)^3}.
\end{align*}
A similar calculation gives $\bar{D}f_8=\bar{D}f_9 = 0$.

\subsection{Verifying relations}
I found 6 relations displayed in equation \ref{eq:SL/Z2_rel}. The first and last relations are obvious. The second relation holds because
\begin{align*}
    f_8^2 & =v^4t^2u^2 -\frac{N}{2(2\pi)^3}\frac{v^3tu^2(1+2|u|^2)}{(1+u\bar{u})}+\frac{N^2}{16(2\pi)^6}\frac{v^2u^2(1+2|u|^2)^2}{(1+|u|^2)^2}\\
    &\quad-\frac{N}{2(2\pi)^3}\frac{v^3tu^2(1+2|u|^2)}{(1+u\bar{u})}+\frac{N^2}{4(2\pi)^6}\frac{v^2u^2(1+2|u|^2)^2}{(1+u\bar{u})^2}-\frac{N^3}{32(2\pi)^9}\frac{vu^2(1+2|u|^2)^2}{t(1+u\bar{u})^2}\\
    &\quad+\frac{N^2}{16(2\pi)^6}\frac{v^2u^2(1+2|u|^2)^2}{(1+|u|^2)^2} -\frac{N^3}{32(2\pi)^9}\frac{vu^2(1+2|u|^2)^3}{t(1+u\bar{u})^3}+\frac{N^4}{16^2(2\pi)^{12}}\frac{u^2(1+2|u|^2)^4}{t^2(1+|u|^2)^4}\\
    f_7f_9 & = v^4t^2u^2 -\frac{N}{2(2\pi)^3}\frac{v^3tu^2(1+2|u|^2)}{(1+u\bar{u})}+\frac{N^2}{16(2\pi)^6}\frac{v^2u^2(1+2|u|^2)^2}{(1+|u|^2)^2}\\
    &-\frac{N}{2(2\pi)^3}\frac{v^3tu^2(1+2|u|^2)}{(1+u\bar{u})} +\frac{N^2}{4(2\pi)^6}\frac{v^2u^2(1+2|u|^2)^2}{(1+u\bar{u})^2}-\frac{N^3}{32(2\pi)^9}\frac{vu^2(1+2|u|^2)^3}{t(1+u\bar{u})^3}\\
    &+\frac{N^2}{16(2\pi)^6}\frac{v^2u^2(1+2|u|^2)^2}{(1+|u|^2)^2} -\frac{N^3}{32(2\pi)^9}\frac{vu^2(1+2|u|^2)^3}{t(1+u\bar{u})^3}+\frac{N^4}{16^2(2\pi)^{12}}\frac{u^2(1+2|u|^2)^4}{t(1+|u|^2)^4}.
\end{align*}
For the third, fourth, and fifth relations, after computing the square terms, the match is trivial since $f_1,f_2,f_3$ are monomials. We see that
\begin{align*}
    f_4^2 & = v^2t^2-2\frac{N}{4(2\pi)^3}\frac{vt(1+2|u|^2)}{(1+u\bar{u})}+\frac{N^2}{16(2\pi)^6}\frac{(1+2|u|^2)^2}{(1+u\bar{u})^2}\\
    f_5^2 & = v^2t^2u^2-2\frac{N}{4(2\pi)^3}\frac{vtu^2(1+2|u|^2)}{(1+u\bar{u})}+\frac{N^2}{16(2\pi)^6}\frac{u^2(1+2|u|^2)^2}{(1+u\bar{u})^2}\\
    f_6^2 & = v^2t^2u^4-2\frac{N}{4(2\pi)^3}\frac{vtu^4(1+2|u|^2)}{(1+u\bar{u})}+\frac{N^2}{16(2\pi)^6}\frac{u^4(1+2|u|^2)^2}{(1+u\bar{u})^2}.
\end{align*}

\section{From Beltrami differentials to matrix models}\label{app:MM-Beltrami}

Here, I carefully explain how the deformation of the gauged $\beta\gamma$ system studied by Dijkgraaf and Vafa \cite{Dijkgraaf_2002} arises from introducing a Beltrami differential in the bulk using the topological string language of this paper. This is a basic computation that should come easily to those familiar with topological strings, but I add it here to keep this paper self-contained.

Recall that the theory on a D1 brane wrapping $\CP^1\subset T^*\CP^1\times\C$ is given by
\begin{align*}
    \int_{\CP^{1|2}} (\omega\lrcorner\partial_z) \wedge d\theta_1\wedge d\theta_2\wedge \text{hCS}(\mathcal{A})
\end{align*}
where $\mathcal{A}\in\Omega^{0,\bullet}(\CP^1,\g)[\theta_1,\theta_2]$, $\omega=dzd\bar{z}/(1+|z|^2)^2$ is the K\"ahler form on $\CP^1$, and $\theta_1,\theta_2$ are the fermionic coordinates dual to the $\mathcal{O}(-2)\times\C$ directions in the bulk.

We can gauge fix our superfield to remove the antiholomorphic form components and expand in terms of the fermionic directions
\begin{align*}
    \mathcal{A} = c+\Phi_i\theta^i+b\theta_1\theta_2
\end{align*}
where our fields are now sections of line bundles:
\begin{align*}
    c\in\mathcal{O}\qquad \Phi_1\in\mathcal{O}(-2)\qquad \Phi_2\in\mathcal{O}\qquad b\in\mathcal{O}(-2).
\end{align*}
We can use this expansion and integrate out the fermionic directions to get the action
\begin{align*}
    \int_{\CP^1}\Tr(\Phi_1\dbar\Phi_2+b\dbar c)
\end{align*}
where I have absorbed the $(1,0)$ form into $\Phi_1$ and $b$ since they are sections of the canonical bundle. This is the gauge fixed version of the action studied in \cite{Dijkgraaf_2002}.

Let me now explain how this theory is deformed by the Beltrami differential in the bulk. The Beltrami considered in this story takes the form
\begin{align*}
    P(w_1)\omega dw_1dw_2 \cong \frac{d\bar{z}}{(1+|z|^2)^2}\partial_{w_1}P(w_1)\partial_{w_2} = \mu
\end{align*}
using the isomorphism $PV^{\bullet,\bullet}(T^*\CP^1\times\C)\cong\Omega^{3-\bullet,\bullet}(T^*\CP^1\times\C)$. Turning on a Beltrami can be viewed as introducing a closed string field $\mu$ to the bulk, which, as explained in \cite{Costello:2019jsy}, deforms the theory on the brane by
\begin{align*}
    \int_{\CP^{1|2}}(\omega\lrcorner\partial_z) \wedge d\theta_1\wedge d\theta_2\wedge \Tr(\partial_{\theta_1}P(\theta_1)\mathcal{A}\partial_{\theta_2}\mathcal{A}).
\end{align*}
In section \ref{sec:MM}, I discuss the special case where $P(w_1) = w_1^2$. This leads to the following deformation of the brane theory
\begin{align*}
    2\int_{\CP^{1|2}}(\omega\lrcorner\partial_z) \wedge d\theta_1\wedge d\theta_2\wedge \Tr(\theta_1\mathcal{A}\partial_{\theta_2}\mathcal{A}) = 2\int_{\CP^{1|2}}(\omega\lrcorner\partial_z) \wedge d\theta_1\wedge d\theta_2\wedge \Tr(\Phi_2^2).
\end{align*}

\bibliographystyle{JHEP}

\bibliography{bib.bib}

\end{document}